\journal{Statistics and Probability Letters}
\newtheorem{prop}{Proposition}
\newtheorem{rem}{Remark}
\newtheorem{assum}{Assumption}
\newcommand{\R}{\mathbb{R}}
\newcommand{\Hs}{\mathcal{H}}
\newcommand{\N}{\mathcal{N}}
\newcommand{\Cov}{\mathcal{C}}
\renewcommand{\epsilon}{\varepsilon}
\renewcommand\phi{\varphi}
\renewcommand{\L}{\ell} 
\newcommand{\E}{\mathrm{E}}
\begin{document}

\begin{frontmatter}



\title{A Stable Manifold MCMC Method for High Dimensions}


\author{Alexandros Beskos}

\address{Dept of Statistics and Applied Probability, National University of Singapore}

\begin{abstract}

We combine two important recent  advancements of MCMC algorithms: first, methods utilizing the intrinsic manifold structure of the parameter space; then, algorithms effective for targets in infinite-dimensions with the critical property that their mixing time is robust to mesh refinement.

\end{abstract}

\begin{keyword}
Manifold MCMC \sep Metropolis-adjusted Langevin algorithm \sep 
Cameron-Martin space, infinite dimensions.


\end{keyword}

\end{frontmatter}



\section{Introduction}

Manifold MCMC methods were introduced in \cite{giro:11} and
were shown to be effective for challenging  target distributions
with complex local-correlation structures. Furthermore, MCMC methods
robust in high dimensions have been recently developed (see e.g.\@ \cite{besk:08}) for
important statistical models giving rise to targets defined as change of measures from
Gaussian laws in infinite dimensions. This work aims to develop new MCMC algorithms
with the objective of joining strengths from the two 
directions of recent methodological progress.
The new  methodology will be illustrated on a model structure involving a diffusion process 
observed with (small) error. This simple example has been selected to clearly showcase the effect of the new method, 
as the infinite-dimensional aspect of the MCMC algorithm will deal with the high-dimensionality of the diffusion path, while 
its manifold aspect will provide a principled mechanism for driving proposed diffusion paths close to the data. 

We will focus on the Manifold Metropolis-adjusted Langevin algorithm (MMALA) 
of \cite{giro:11}.
On the infinite-dimensional side, the methods in, e.g.\@, \cite{besk:08, cott:13} are relevant for targets
$\Pi$  defined as change of measures from Gaussian laws, i.e.:
\begin{equation}
\label{eq:pi}
\frac{d\Pi}{d\tilde{\Pi}}(x) = \exp\{ -\Phi(x)\}\ , \quad  x\in\Hs\  , 
\end{equation}
where $\Hs$ is a separable Hilbert space equipped with the inner product $\langle \cdot, \cdot \rangle$, 
$\tilde{\Pi}= \N(\mu, \Cov)$  a Gaussian distribution on $\Hs$ of mean $\mu$ 
and covariance $\Cov$, and $\Phi:\Hs\mapsto \R$. See e.g.\@ \cite{prat:92} for a treatment of Gaussian laws on general 
Hilbert spaces. The approaches in  \cite{besk:08}  are effective in the case when $\Hs$ is infinite-dimensional
since - upon  finite-dimensional projection and selection of some relevant mesh size -  they provide algorithms of \emph{mesh-free} mixing time, separating themselves from standard MCMC algorithms for which 
mixing time deteriorates with increasing dimension (see e.g. \cite{robe:01}).
The contribution of this paper will be to combine the manifold and infinite-dimensional methods, 
thus deriving new algorithms that could unify the positive computational effects of these two approaches.
The main algorithm developed in this paper will be assigned the label $\infty$-MMALA (the label $\infty$-MALA 
will be used for the infinite-dimensional MALA of \cite{besk:08}; recall that MMALA refers to the manifold method).

The structure of the paper is as folllows.
In Section \ref{sec:algorithm} we develop $\infty$-MMALA and state conditions under which it is well-defined in infinite dimensions. In Section \ref{sec:example} we  show numerical results from applying $\infty$-MMALA on a diffusion-driven model. In Section \ref{sec:directions} we discuss further directions.

\section{$\infty$-MMALA: Manifold MALA in Infinite-Dimensions}
\label{sec:algorithm}
We  focus mainly  on the practicalities of the derivation of the algorithm and avoid  technicalities.
Therefore, we will not discuss here the notion of differentiation (denoted by $\nabla$) in general Hilbert spaces or the well-posedness of the Langevin stochastic differential equation (SDE) below or its manifold version on arbitrary Hilbert spaces. 
The reader could simply  assume that the development of the algorithm happens on some $N$-dimensional projection 
of the infinite-dimensional target $\Pi$, for some large $N\ge 1$, so that the state-space is the Euclidean $\R^{N}$.
Mathematical rigor will be applied  when defining the final algorithm (involving 
the easier to handle time-discretised SDE dynamics). So, $\Pi$ is used interchangeably below to denote both the 
infinite-dimensional target and the $N$-dimensional projection; a similar convention is applied for other related notions, e.g.~for
$\tilde{\Pi}$. Also, from the definition of $\Pi$ in (\ref{eq:pi}) we  have  (in a formal sense, in the case of general Hilbert spaces):
\begin{equation*}
\Pi(x) \propto \exp\{\L(x)\} = \exp\big\{-\Phi(x) - \tfrac{1}{2}\langle x-\mu, L(x-\mu) \rangle\big\}\  ,
\end{equation*}
where we have set
$L = \Cov^{-1}$.
%

\subsection{MMALA and $\infty$-MALA Algorithms}

MMALA utilizes the dynamics of the Langevin SDE on the manifold space generated by 
a chosen metric tensor $G(x)$. Its expression is as
follows: 
\begin{align}
\label{eq:msde}
d {x} = \tfrac{1}{2}\,\tilde{\nabla}\L({x})dt + d\tilde{b}\  ,
\end{align}
with $\tilde{\nabla} = G^{-1}({x})\nabla$ corresponding to 
differentiation along the manifold and $d\tilde{b}$ denoting infinitesimal increments of a Brownian 
motion on the manifold space. In agreement with the comments above, for all practical purposes one 
can assume that the state-space is $x\in \R^N$, so that  $G^{-1}(x)$ is assumed to be a symmetric positive-definite matrix 
in $\R^{N\times N}$. 
Using the analytical expressions from \cite{giro:11} we can 
 equivalently re-express (\ref{eq:msde}) in terms of the following more familiar SDE on Euclidean space: 
\begin{align}
d{x} = G({x})^{-1}\,\big\{ \tfrac{1}{2}\,\nabla\L({x}) + 
\tfrac{1}{2}\nabla \log|G({x})| + \nabla  \big\}\,dt 
+ G({x})^{-1/2}\,db \  ,
\label{eq:sde}
\end{align}
with $db$ now denoting increments of  standard Brownian motion and $|G(x)|$ is the determinant of $G(x)$.
The Langevin SDE (\ref{eq:sde}) will now be time-discretised to provide a proposal in a Metropolis-Hastings framework.
The two algorithms, MMALA and $\infty$-MALA are now specified as follows:
\begin{itemize}
\item MMALA: it applies the standard Euler finite-difference scheme to time-discretise the dynamics (\ref{eq:sde}).
For current position $x$, this will provide a proposed transition to, say, $x'$, accepted or rejected according to the related Metropolis-Hastings ratio. This algorithm will typically 
not be well-defined in infinite-dimensions.
\item $\infty$-MALA: in this case $G(x)\equiv L$, so the drift function in (\ref{eq:sde}) becomes:
\begin{equation*}
G^{-1}(x)\,\tfrac{1}{2}\nabla \L(x)\equiv \tfrac{1}{2}\,\big\{  -\Cov\,\nabla \Phi(x) - (x-\mu) \big\} \ .
\end{equation*}
$\infty$-MALA employs a semi-implicit discretisation scheme, where the linear term $x$ in the drift is replaced  by
with $\frac{x'+x}{2}$ when applying finite differences. Solving w.r.t.\@ $x'$ delivers a proposal of positive acceptance 
probability even in infinite dimensions  (under conditions). 
This is because due to the semi-implicit scheme, and under weak conditions on $\Phi$, $\tilde{\Pi}$, 
the distributions of $(x,x')$ and $(x',x)$ are absolutely continuous w.r.t.\@ 
each other,  thus allowing for a non-zero Metropolis-Hastings ratio. We will discuss this also in the sequel, as $\infty$-MMALA
will require some of the tools used for the development $\infty$-MALA.
\end{itemize}

\subsection{Time-Discretisation of Langevin Dynamics for $\infty$-MMALA}

We will in fact opt for a simplified version of the dynamics in (\ref{eq:sde}), with the objective of combining designated moves with computational efficiency. 
As already noted in \cite{giro:11}
most of the strength of the manifold method is  captured by the dynamics that only involve the term 
$G^{-1}(x)\,\frac{1}{2}\,\L(x)$ in the drift function. Calculation of the removed Christofell symbols is
typically  expensive (of the order of $\mathcal{O}(N^3)$)
and could eradicate in the balance their effect on improved mixing.


Guided by the semi-implicit idea behind $\infty$-MALA, we  introduce a time-discretisation scheme 
which  possesses the critical property of giving rise to a well-posed algorithm in infinite-dimensions. 
The scheme is as follows (we add/subtract $G(x)x$ in the drift and apply
an implicit scheme in one term):
%
%
\begin{align}
x' - x = \tfrac{1}{2}\,G({x})^{-1}\,\big\{ -G(x)\,&\tfrac{x'+x}{2} + G(x)\,x + \nabla \L(x) \big\}\,h \label{eq:implicit} \\  &+ \sqrt{h}\,\N(0,G(x)^{-1})\ , \nonumber
\end{align}
for a step-size $h>0$,
which can equivalently be written as:
\begin{align}
\label{eq:proposal}
x' = \tfrac{1-h/4}{1+h/4}\,x + \tfrac{h/2}{1+h/4}\,S(x) + \tfrac{\sqrt{h}}{1+h/4}\,\N(0,G(x)^{-1})\  ,
\end{align}
where we have defined:
\begin{equation*}
S(x) = - G(x)^{-1}\{\, \nabla \Phi(x) - (G(x)-L)x - L\mu\, \}\  .
\end{equation*}
Notice that the choice $G(x)=L$ will deliver $\infty$-MALA. Equation (\ref{eq:proposal}) provides
the proposal for $\infty$-MMALA upon which we will apply the Metropolis-Hasting acceptance rule. In Section \ref{sub:acc} below we give the expression for the acceptance probability 
of proposal (\ref{eq:proposal}) on the infinite-dimensional space $\Hs$.

\subsection{Choice of Metric Tensor $G(x)$}
Following \cite{giro:11}, an often effective  approach
is to choose the metric tensor as the expected Fisher information (we write 
$\Phi(x)=\Phi(x;y)$ to emphasize dependence of $\Phi$ on some data $Y=y$): 
\begin{align}
\label{eq:tensor}
-\E_{Y|{x}}\nabla^{2}\L({x}) &= \E_{Y|{x}}\nabla^2_x\Phi({x};Y) + L \\
&= \E_{Y|{x}}\big[\,\nabla_{{x}} \Phi({x};Y)\,\{\nabla_{x} \Phi({x};Y)\}^{\top}\,\big] + 
L \  .\nonumber
\end{align} 
Used in the context of high-dimensional $x\in \R^{N}$, this  can sometimes lead to prohibitive computations
as an order of $N$. Thus, for a given class of target distributions one could try to balance improved
mixing due to a good choice of $G(x)$ with computational considerations, and maybe opt for a convenient proxy of the expected Fisher information 
(or the observed Fisher information, or other tensor understood to be appropriate in a given scenario).

\subsection{Diversion on Metropolis-Hastings in General State-Spaces}
Our objective is to show that the acceptance ratio is non-trivial when working 
on an infinite-dimensional Hilbert space $\Hs$. 
We denote by $Q(x,dx')$ the transition probability measure corresponding to $\infty$-MMALA proposal (\ref{eq:proposal}). 
As shown in \cite{besk:08} for $\infty$-MALA, a deviation 
from standard proofs of invariance for Metropolis-Hastings kernels is that there is typically no common 
dominating measure for the probability measures $Q(x,dx')$ over all $x\in \Hs$. So, one has to resort to a
generalised definition of the Metropolis-Hastings ratio in \cite{tier:98}.  
Following \cite{tier:98}, one has to seek for conditions so that - for $x\sim \Pi(dx)$, in stationarity - 
the laws of $(x,x')$ and $(x',x)$ are absolutely continuous w.r.t.\@ each other (we use the symbol 
$`\simeq'$ to denote such a relation between probability laws); then, their Radon-Nikodym 
derivative provides the Metropolis-Hastings acceptance ratio.

More analytically, we define the bivariate probability measure on $\Hs\times \Hs$:
\begin{equation*}
\mu(dx,dx') = \Pi(dx)Q(x,dx')\ . 
\end{equation*}
and the corresponding symmetric measure $\mu^{\top}(dx,dx'):=\mu(dx',dx)$. 
Following \cite{tier:98}, if $\mu\simeq \mu^{\top}$ then 
the Metropolis-Hastings acceptance probability is non-trivial and equal to: 
\begin{equation}
\label{eq:acc}
1 \wedge \frac{d\mu^{\top}}{d\mu}(x,x')\ . 
\end{equation}
Thus, we will specify conditions under which $\mu\simeq \mu^{\top}$ and find $(d\mu^{\top}/d\mu)(x,x')$.

\subsection{Proof of Well-Posedness of $\infty$-MMALA}
\label{sub:acc}

The derivation below has connections with the one in \cite{besk:08} for $\infty$-MALA.
To demonstrate well-posedness of $\infty$-MMALA in infinite dimensions we need some assumptions.
\begin{assum}
\label{ass:1}
$\tilde{\Pi}$-a.s.\@ in $x$, we have $\N(0,G(x)^{-1})\simeq \N(0,L^{-1})$, 
with:
\begin{align*}
\kappa(v; x) := \frac{d\N(0,G(x)^{-1})}{d\N(0,L^{-1})}(v)\ ,\quad x,v\in\Hs\ . 
\end{align*}
\end{assum}
\begin{assum}
\label{ass:2}
$\tilde{\Pi}$-a.s.\@ in $x$, quantity $S(x)$ is an element of the Cameron-Martin space of $\N(0,\Cov)$, 
that is $S(x)\in \mathrm{Im}\,\Cov^{1/2}$.
\end{assum}
%
$\mathrm{Im}\,\Cov^{1/2}$ denotes the image space of $\Cov^{1/2}$.
The Cameron-Martin space of $\N(0,\Cov)$ is comprised of all elements of $\Hs$ 
that preserve absolute continuity of $\N(0,\Cov)$ when translating it. In particular, 
we have the following result. 
%
\begin{prop}
\label{eq:RD}
Consider $\N(0,\Sigma)$ on $\Hs$. If
$R(x) = x + \Sigma^{1/2}x_0$
for a constant  $x_0\in \Hs$, then  $\N(0,\Sigma)\simeq N(0, \Sigma)\circ R^{-1}$
 with density:
\begin{align*}
\frac{d\N(0,\Sigma)\circ R^{-1}}{d\N(0,\Sigma)}(x) = \exp\,\big\{\, \langle x_0, \Sigma^{-1/2}x \rangle - 
\tfrac{1}{2}|x_0|^2  \,\big\}\  .
\end{align*}
\end{prop}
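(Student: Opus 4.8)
The plan is to recognise the statement as the Cameron--Martin theorem for the shift $h := \Sigma^{1/2}x_0$. Writing $X\sim\N(0,\Sigma)$, the pushforward $\N(0,\Sigma)\circ R^{-1}$ is exactly the law of $R(X)=X+h$, i.e.\ $\N(h,\Sigma)$; so the task reduces to showing $\N(h,\Sigma)\simeq\N(0,\Sigma)$ and identifying the density. The hypothesis $x_0\in\Hs$ is what places $h$ in the Cameron--Martin space $\mathrm{Im}\,\Sigma^{1/2}$, which we expect to be precisely the condition that makes the translated measure equivalent rather than singular.

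First I would diagonalise $\Sigma$. Let $\{e_i\}$ be an orthonormal eigenbasis with $\Sigma e_i=\lambda_i e_i$, $\lambda_i>0$, and set $x_i=\langle x,e_i\rangle$, $a_i=\langle x_0,e_i\rangle$. Under $\N(0,\Sigma)$ the coordinates $x_i$ are independent with $x_i\sim\N(0,\lambda_i)$, while $h_i=\langle h,e_i\rangle=\lambda_i^{1/2}a_i$ and $\sum_i a_i^2=|x_0|^2<\infty$. In each coordinate the translation is a one-dimensional Gaussian shift, with the elementary density
\[
\frac{d\N(h_i,\lambda_i)}{d\N(0,\lambda_i)}(x_i)=\exp\Big\{a_i\lambda_i^{-1/2}x_i-\tfrac12 a_i^2\Big\}.
\]

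Next I would assemble these into the exponential of the partial sum $Z_N=\sum_{i\le N}\big(a_i\lambda_i^{-1/2}x_i-\tfrac12 a_i^2\big)$. The point is that $\lambda_i^{-1/2}x_i$ are independent standard normal variables under $\N(0,\Sigma)$, so $W_N:=\sum_{i\le N}a_i\lambda_i^{-1/2}x_i$ has variance $\sum_{i\le N}a_i^2\to|x_0|^2$ and converges in $L^2(\N(0,\Sigma))$ to a limit $W\sim\N(0,|x_0|^2)$; this $L^2$-limit is exactly the Paley--Wiener integral that the formal notation $\langle x_0,\Sigma^{-1/2}x\rangle$ denotes. Each $\exp(Z_N)$ is a nonnegative martingale of unit mean with respect to the filtration generated by the first $N$ coordinates, and the candidate limit $\exp(W-\tfrac12|x_0|^2)$ also has unit mean by the Gaussian moment-generating function. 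Almost-sure convergence together with Scheff\'e's lemma then upgrades this to $L^1$-convergence, identifying $\exp(W-\tfrac12|x_0|^2)$ as a bona fide density and giving $\N(h,\Sigma)\ll\N(0,\Sigma)$; since this density is a.s.\ strictly positive, absolute continuity holds in both directions, yielding $\simeq$ and the stated formula.

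The main obstacle is not the algebra but the two infinite-dimensional subtleties: giving rigorous meaning to $\langle x_0,\Sigma^{-1/2}x\rangle$ (since $\Sigma^{-1/2}x$ has no pointwise sense for $\N(0,\Sigma)$-typical $x$), which the Paley--Wiener $L^2$-limit resolves, and justifying that the infinite product of one-dimensional densities converges to the claimed Radon--Nikodym derivative, which the martingale and uniform-integrability argument handles. Alternatively, the whole statement can be quoted directly as the Cameron--Martin theorem from \cite{prat:92}, with the above serving to make the density explicit.
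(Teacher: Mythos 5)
Your proof is correct, but it takes a genuinely different route from the paper: the paper offers no argument at all for this proposition, simply citing it as Theorem 2.21 of \cite{prat:92} (the Cameron--Martin theorem), whereas you supply the standard self-contained proof. Your reduction of the pushforward to $\N(h,\Sigma)$ with $h=\Sigma^{1/2}x_0$, the coordinatewise density $\exp\{a_i\lambda_i^{-1/2}x_i-\tfrac12 a_i^2\}$ in the eigenbasis of $\Sigma$, and the passage to the limit via the $L^2$-bounded martingale $W_N$ plus Scheff\'e are all sound; this is essentially the textbook proof underlying the cited theorem. The one step you compress is the identification of the $L^1$-limit as the actual Radon--Nikodym derivative of $\N(h,\Sigma)$: you should note that for a cylinder set $A$ depending on the first $N$ coordinates one has $\N(h,\Sigma)(A)=\E[\exp(Z_M)\mathbf{1}_A]$ for all $M\ge N$ by the tower property, so the $L^1$-convergence lets you pass to the limit and the identity extends from cylinder sets to all Borel sets by a monotone-class argument. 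With that sentence added your argument is complete. The trade-off is clear: the paper's citation is the economical choice for a methods paper, while your version makes explicit why the hypothesis $x_0\in\Hs$ (equivalently $h\in\mathrm{Im}\,\Sigma^{1/2}$) is exactly what prevents singularity, and gives rigorous meaning to the formal pairing $\langle x_0,\Sigma^{-1/2}x\rangle$ as a Paley--Wiener integral --- a point the paper leaves implicit.
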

This is Theorem 2.21 of \cite{prat:92}.
Notice that we can rewrite the proposal (\ref{eq:proposal}) as:
\begin{align}
\label{eq:Ql}
Q(x,dx')\,:\quad  
x'=\tfrac{1-h/4}{1+h/4}\,x + \tfrac{\sqrt{h}}{1+h/4}\, \N(\tfrac{\sqrt{h}}{2}S(x),G(x)^{-1})\ . 
\end{align}
Let $\tilde{Q}(x,dx')$ 
denote the transition probability law for the update:
\begin{align}
\label{eq:Q}
\tilde{Q}(x,dx')\,:\quad  
x'=\tfrac{1-h/4}{1+h/4}\,x + \tfrac{\sqrt{h}}{1+h/4}\, \N(0,L^{-1})\ . 
\end{align}
We define the reference bivariate Gaussian measure: 
\begin{align*}
\tilde{\mu}(dx, dx') =\tilde{\Pi}(dx)\tilde{Q}(x,dx')\  . 
\end{align*}
It is easy to check that $\tilde{\mu}(dx,dx')$ is symmetric (see \cite{besk:08} for details; this is because the sum of the squares of the scalars in front of $x$ and $\N(0,L^{-1})$ in 
(\ref{eq:Q}) is unit), so that 
$\tilde{\mu}(dx,dx') = \tilde{\mu}^{\top}(dx,dx'):= \tilde{\mu}(dx',dx)$. 

\begin{prop}
\label{prop:N}
Under Assumptions \ref{ass:1} and \ref{ass:2} above,  $\tilde{\Pi}$-a.s.\@ in $x$ we have that 
$\N(\tfrac{\sqrt{h}}{2}S(x),G^{-1}(x))\simeq \N(0,L^{-1})$  with density:
\begin{align*}
&\lambda(v;x)=\frac{d\N(\frac{\sqrt{h}}{2}S(x),G^{-1}(x))}{d\N(0,L^{-1})}(v) =   \\[0.2cm]
&\qquad \quad \exp\Big\{\, \tfrac{\sqrt{h}}{2}\,\big\langle G^{1/2}(x)S(x), G(x)^{1/2}\,v \big\rangle - 
\tfrac{h}{8}\,\big|G(x)^{1/2}S(x)\big|^2 \,\Big\}\,\times \kappa(v;x)\  .
\end{align*}
\end{prop}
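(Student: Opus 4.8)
The plan is to compute $\lambda(v;x)$ by factorising the target Radon--Nikodym derivative through the intermediate mean-zero Gaussian $\N(0,G(x)^{-1})$. Concretely, I would write, via the chain rule for Radon--Nikodym derivatives,
\begin{equation*}
\frac{d\N(\tfrac{\sqrt{h}}{2}S(x),G(x)^{-1})}{d\N(0,L^{-1})}(v)
= \frac{d\N(\tfrac{\sqrt{h}}{2}S(x),G(x)^{-1})}{d\N(0,G(x)^{-1})}(v)\cdot
\frac{d\N(0,G(x)^{-1})}{d\N(0,L^{-1})}(v)\ ,
\end{equation*}
so that the second factor is exactly $\kappa(v;x)$ by Assumption \ref{ass:1}, and it remains only to identify the first factor as the stated exponential. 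This factorisation is legitimate once all three measures are shown to be mutually equivalent, which I address below.

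For the first factor, observe that $\N(\tfrac{\sqrt{h}}{2}S(x),G(x)^{-1})$ is a pure translation of $\N(0,G(x)^{-1})$ by the vector $\tfrac{\sqrt{h}}{2}S(x)$. I would therefore apply Proposition \ref{eq:RD} with $\Sigma = G(x)^{-1}$ and the shift $x_0$ determined by $\Sigma^{1/2}x_0 = \tfrac{\sqrt{h}}{2}S(x)$, i.e.\ $x_0 = \tfrac{\sqrt{h}}{2}G(x)^{1/2}S(x)$; then $\N(0,G(x)^{-1})\circ R^{-1} = \N(\tfrac{\sqrt{h}}{2}S(x),G(x)^{-1})$ and, since $\Sigma^{-1/2}=G(x)^{1/2}$, substituting $x_0$ into $\langle x_0,\Sigma^{-1/2}v\rangle - \tfrac{1}{2}|x_0|^2$ yields
\begin{equation*}
\frac{d\N(\tfrac{\sqrt{h}}{2}S(x),G(x)^{-1})}{d\N(0,G(x)^{-1})}(v)
= \exp\Big\{\tfrac{\sqrt{h}}{2}\big\langle G(x)^{1/2}S(x),\,G(x)^{1/2}v\big\rangle
- \tfrac{h}{8}\big|G(x)^{1/2}S(x)\big|^2\Big\}\ .
\end{equation*}
Multiplying this by $\kappa(v;x)$ reproduces the claimed formula for $\lambda(v;x)$.

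The step I expect to be the main obstacle is justifying that Proposition \ref{eq:RD} applies, i.e.\ that the shift vector lies in the correct space: the proposition requires $x_0\in\Hs$, equivalently $\tfrac{\sqrt{h}}{2}S(x)\in\mathrm{Im}\,G(x)^{-1/2}$, the Cameron--Martin space of $\N(0,G(x)^{-1})$. Assumption \ref{ass:2} only supplies $S(x)\in\mathrm{Im}\,\Cov^{1/2}=\mathrm{Im}\,L^{-1/2}$, the Cameron--Martin space of $\N(0,L^{-1})$. To bridge the two I would invoke the Feldman--H\'ajek dichotomy: the equivalence $\N(0,G(x)^{-1})\simeq\N(0,L^{-1})$ granted by Assumption \ref{ass:1} forces the two Gaussians to share a common Cameron--Martin space, so $\mathrm{Im}\,G(x)^{-1/2}=\mathrm{Im}\,L^{-1/2}$ and $x_0$ is indeed a well-defined element of $\Hs$. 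This same equivalence, combined with the translation equivalence just established through Proposition \ref{eq:RD}, renders all three measures mutually absolutely continuous and thereby validates the chain-rule factorisation used at the outset; the remaining manipulations are the routine substitutions recorded above, holding $\tilde{\Pi}$-a.s.\ in $x$.
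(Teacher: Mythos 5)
Your argument is correct and follows essentially the same route as the paper's own proof: the chain-rule factorisation through $\N(0,G(x)^{-1})$, Assumption \ref{ass:1} for the factor $\kappa(v;x)$, the Feldman--H\'ajek theorem to identify the Cameron--Martin spaces of $G(x)^{-1}$ and $L^{-1}$ so that the shift is admissible, and Proposition \ref{eq:RD} for the translation density. If anything you are slightly more careful than the paper, which writes $\Sigma\equiv G(x)^{-1/2}$ where $\Sigma\equiv G(x)^{-1}$ is clearly intended, as your substitution confirms.
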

\begin{proof}
We use the chain rule:
\begin{align}
\label{eq:chain}
\tfrac{d\N(\frac{\sqrt{h}}{2}S(x),G^{-1}(x))}{d\N(0,L^{-1})}(v) = 
\tfrac{d\N(\frac{\sqrt{h}}{2}S(x),G^{-1}(x))}{d\N(0,G(x)^{-1})}(v) \times \tfrac{d\N(0,G(x)^{-1})}{d\N(0,L^{-1})}(v)\ . 
\end{align}
The last density  is found via Assumption \ref{ass:1}.
For the other, we use the fact that Assumption \ref{ass:1} implies that operators $L^{-1}$, $G(x)^{-1}$ 
have the same Cameron-Martin space (see Feldman-Hajek theorem in \cite{prat:92}), so that applying Proposition \ref{eq:RD} for $\Sigma\equiv G(x)^{-1/2}$, $x_0 \equiv (\sqrt{h}/2)\,G(x)^{1/2}S(x)$ (guaranteed to be a proper element of $\Hs$ 
due to having $S(x)\in \mathrm{Im}\,G(x)^{-1/2}$) will give that the first density on the RHS of (\ref{eq:chain}) 
is equal to:
\begin{equation*}
\quad \exp\Big\{\, \tfrac{\sqrt{h}}{2}\,\big\langle G^{1/2}(x)S(x), G(x)^{1/2}\,v \big\rangle - 
\tfrac{h}{8}\,\big|G(x)^{1/2}S(x)\big|^2 \,\Big\}\ .
\end{equation*}
The proof is now complete.
%
%
%
\end{proof}
From Proposition \ref{prop:N}, eqs (\ref{eq:Ql}), (\ref{eq:Q}) imply directly that $\tilde{\Pi}$-a.s.\@ in 
$x$ we have $Q(x,dx')\simeq \tilde{Q}(x,dx')$, thus also  $\mu(dx,dx')\simeq \tilde{\mu}(dx,dx')$.  
This essentially completes the well-posedness of $\infty$-MMALA as - due to the symmetricity of $\tilde{\mu}(dx,dx')$ -
it  implies that $\mu(dx,dx') \simeq \mu(dx,dx')$. Indeed, we can find the density required in the acceptance 
probability (\ref{eq:acc}) as follows. First, we find:
\begin{align*}
 \frac{d\mu}{d\tilde{\mu}}(x,x') = \frac{d\Pi}{d\tilde{\Pi}}(x)\frac{dQ}{d\tilde{Q}}(x,x') = \exp\{-\Phi(x)\}\,
\lambda(\rho^{-1}(x';x);x)\ , 
\end{align*}
where we denote by $\rho^{-1}(\cdot;x)$ the inverse of the 1-1 mapping:
\begin{equation*}
v \mapsto \rho(v;x) = \tfrac{1-h/4}{1+h/4}\,x + \tfrac{\sqrt{h}}{1+h/4}\,v \ . 
\end{equation*}
From the definition of symmetric measures, we have that  $(d\mu^{\top}/d\tilde{\mu}^{\top})(x,x') =(d\mu/d\tilde{\mu})(x',x) $, thus we finally have that:
\begin{equation}
\label{eq:acce}
\frac{d\mu^{\top}}{d\mu}(x,x') = \frac{(d\mu/d\tilde{\mu})(x',x)}{(d\mu/d\tilde{\mu})(x,x')} = 
\frac{ \exp\{-\Phi(x')\}\,
\lambda(\rho^{-1}(x;x');x')}{ \exp\{-\Phi(x)\}\,
\lambda(\rho^{-1}(x';x);x)} \ .
\end{equation}
Thus, we have proven that $\mu(dx,dx')\simeq\mu^{\top}(dx,dx')$ 
with the above  density. The complete $\infty$-MMALA algorithm can now be summarized
in Table \ref{tab:MALA}.
\begin{table}[!h]
\begin{flushleft}
\medskip
\hrule
\medskip
{\itshape $\infty$-MMALA:}
{\itshape
\begin{enumerate}
\item[(i)] Start with an initial value $x^{(0)}$ from $\N(\mu,L^{-1})$, or another Gaussian law 
absolutely continuous w.r.t.\@ the target $\Pi$ and set $k=0$. \vspace{0.2cm}

\item[(ii)] Given current $x=x^{(k)}$, propose the transition: 
\begin{align*}
x' = \tfrac{1-h/4}{1+h/4}\,x + \tfrac{h/2}{1+h/4}\,S(x) + \tfrac{\sqrt{h}}{1+h/4}\,\N(0,G(x)^{-1})\ , 
\end{align*}
%
%
Set $x^{(k+1)}=x'$ with probability $1\wedge (d\mu^{\top}/d\mu)(x,x')$ 
for $(d\mu^{\top}/d\mu)(x,x')$ as specified in (\ref{eq:acce}), otherwise set $x^{(k+1)}=x$.

%
\item[(iii)] Set $k \to k+1$ and go to (ii).

\end{enumerate}}
\hrule
\end{flushleft}
\vspace{-0.6cm}
\caption{Definition of $\infty$-MMALA.}
\label{tab:MALA}
\end{table}

%

\begin{rem}
We note that earlier works  (e.g.\@ the recent \cite{cott:13}) have looked at 
$\infty$-MALA for a \emph{constant} metric tensor $G(x)$; \cite{cott:13} use $\infty$-MALA
corresponding to the choice $G(x)=L$ for the algorithm described here.
The extension to a non-constant metric tensor is non-trivial and  involved: i) the development of the discretisation 
scheme in (\ref{eq:implicit}) which is not an apparent generalisation of the scheme for a constant metric tensor 
of the earlier works; ii) the analytical justification of the well-posedness in infinite-dimensions of the new algorithm.
\end{rem}

\section{Illustrative Application}
\label{sec:example}
We consider an SDE observed with small error. That is, we have:
\begin{equation}
\label{eq:sd}
dx_t  = a(x_t)dt + dw_t\ , \quad x_0=x^*\in\R\ , 
\end{equation}
where $w$ denotes standard Brownian motion on $\R$, with data points:
\begin{equation*}
y_i = f(x_{t_i}) + \N(0,\sigma^2)\ , \quad 1\le i\le n \  ,
\end{equation*}
for  times $t_1 < \cdots <t_n=T$, a drift  $a:\R\mapsto \R$ and a mapping $f:\R\mapsto \R$.
The target  here is the posterior law $\Pi(dx)$ of the path $x=\{x_t;\,t\in[0,T]\}$ given 
$y=\{y_1,\ldots, y_n\}$, which is of the general form in (\ref{eq:pi}) with:
%
%
%
\begin{align}
\Phi(x) = \Phi_{\sigma}&(x) - \Phi_{b}(x) \ ; \quad 
\Phi_{\sigma}(x) :=  \sum_{i=1}^{n}\frac{(y_i - f(x_{t_i}))^2}{2\sigma^2}\ ; \nonumber  \\
\Phi_{a}(x) &:=  -\int_{0}^{T}a(x_s)dx_s + \tfrac{1}{2}\int_{0}^{T}a^2(x_s)ds  \ . \label{eq:Phi}
\end{align}
$\tilde{\Pi}$ here is the law of a Brownian motion on $[0,T]$ started at $x^*$. 
Also, here $\Hs=L^2([0,T],\R)$, i.e.\@ the space of squared-integrable paths, equipped with the corresponding inner product.
%
There are two main challenges for MCMC algorithms attempting to sample from $\Pi$.

(a) High-Dimensionality of state space. In theory, the  state space is infinite dimensional. In practice, one 
will typically select a large $N\ge 1$ and apply finite-differences to obtain a vector 
$(x_{1},x_{2},\ldots,x_{N})\in \R^{N}$ corresponding to times $\delta, 2\delta, \ldots, N\delta$ 
for mesh-size $\delta=T/N$. 
$\infty$-MALA will be stable under mesh-refinement: the computational cost per step will increase as $\mathcal{O}(N)$, but the mixing time will be $\mathcal{O}(1)$.  MMALA of \cite{giro:11} will deteriorate with decreasing mesh-size $\delta$. 
The work e.g.\@ in \cite{robe:01} suggests that one has to choose $h=\mathcal{O}(N^{-1/3})$ to control the acceptance probability, thus giving a mixing time of
$\mathcal{O}(N^{1/3})$.

(b) Complex a-posteriori covariance structure. 
$\infty$-MALA will deteriorate for decreasing $\sigma>0$
as target $\Pi$ gets distanced from $\tilde{\Pi}$, but the proposal generation mechanism still uses a covariance matrix $G^{-1}(x) \equiv \Cov$ that does not adjust to the complex covariance structure of 
the posterior characterised by small marginal variances at the times of the data and larger ones further from those.  
In contrast to MALA, MMALA accommodates 
for the complex a-posteriori covariance structure of the $x$-path.
%
The newly developed $\infty$-MMALA turns out to be robust both in increasing $N$ and decreasing $\sigma$.

\subsection{Algorithmic Specification and Numerical Results}

As we are interested mainly in the effect of $\sigma$ at the properties of the algorithm, we will obtain the metric 
tensor  $G(x)$ by applying the expected information idea in (\ref{eq:tensor}) only upon $\Phi_{\sigma}$ in (\ref{eq:Phi}) - 
this also guarantees positive-definiteness for the induced $G(x)$.
Thus, we have:
\begin{equation*}
G(x) = \mathrm{diag}_N\Big\{\,\sum_{i=1}^{n}\mathbb{I}_{t_i=j\delta}\,\tfrac{\{f'(x_{j})\}^2}{\sigma^2}\,,\,\,1\le j\le N\,\Big\} + L\ , 
\end{equation*}
for the $N\times N$ tridiagonal covariance matrix of the Brownian motion:
\begin{equation*}
L = 
\left( \begin{array}{ccccc} 
2 & -1 & 0 & 0 & \cdots \\
-1 & 2 & -1 & 0  & \cdots \\
0 & -1 & 2 & -1 & \cdots \\
\vdots & \vdots & \vdots & \vdots & \vdots \\
0  & \cdots & 0 & -1 & 1 
\end{array}
\right) / \delta  \  .
\end{equation*}
%
Critically for the computational properties of $\infty$-MMALA, due to $G(x)$ being tri-diagonal the cost per step is $\mathcal{O}(N)$ (the same holds for MMALA).

\begin{rem}
Regarding Assumptions \ref{ass:1},  \ref{ass:2} here, note that
$\N(0,G(x)^{-1})$ corresponds to a Brownian motion~$\N(0,L^{-1})$ given finite observations with error,
thus clearly $\N(0,G(x)^{-1})\simeq \N(0,L^{-1})$. Then, $\mathrm{Im}\,\Cov^{1/2}$ consists of paths with $x_0=0$ whose $1^{st}$ derivative (in a weak sense) is in $L^2([0,T],\R)$, see e.g. \cite{besk:13}.  We do not present a proof here that $S(x)\in \mathrm{Im}\,\Cov^{1/2}$, but
we mention that in our runs the paths $S(x)$ over the various $x$'s appear to be everywhere 
differentiable apart from the instances of the data where they are only continuous, thus
 everywhere differentiable in the weak sense. 
\end{rem}



We applied $\infty$-MMALA in the following scenario:
\begin{align*}
a(x) = 4-x \  ,\,\,\, x^{*}=2\ ,\,\,\, t_i = i ,\,\,\, n=100\ , \,\,\,   f(x)=x^{3/2}\ , \,\,\, \sigma^2=0.1 \ .
\end{align*}
We used the standard Euler  scheme to discretise $x$ with mesh-size $\delta=10^{-2}$, so that $N=n\delta^{-1}=10^4$. The algorithm was initiated at a path with $x_{t_i}=2$, for $1\le i\le n$, 
with Brownian bridges connecting these points; this position is very far from the center of the target (notice that the $n=100$ `true'
 $x_{t_i}$'s will be scattered around $4$ due to the choice of drift, and $\sigma^2=0.1$).

We used step-size $h=1.0$,  giving average acceptance probability of $82\%$ 
(this changed to $80\%$ when trying $\delta'=\delta/2$, in an empirical 
manifestation of the mesh-free mixing time of $\infty$-MMALA). 
Fig.\ref{fig:traceplot} shows two traceplots for $\infty$-MMALA over $2,000$ iterations,
corresponding to the position of the path $x$ at times $t=37$, $t=36.5$. Notice that the y-axes are on the same 
scale, so the algorithm adjusts automatically to the different sizes of the marginal posteriors (at $t=37$ there is an observation, thus a lot of information about $x_t$). Even if started far from stationarity, 
the algorithm converges almost instantaneously to the target distribution.

\begin{figure}[!h]
\vspace{-2cm}
\begin{centering} 
\hspace{-2.8cm}\includegraphics[scale=0.61]{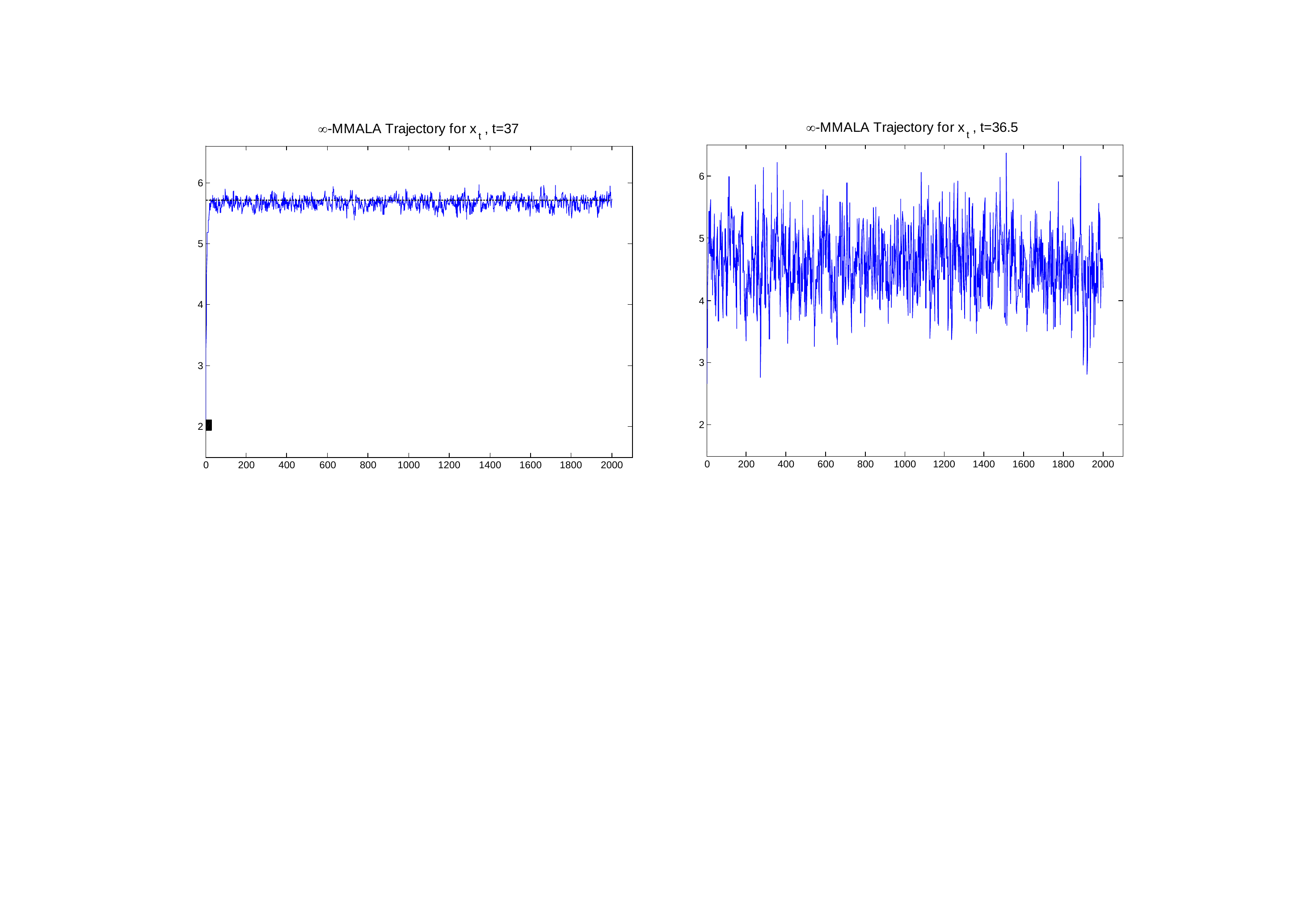} \qquad \qquad 
\end{centering}
\vspace{-7.2cm}
\caption{Traceplots over 2,000 $\infty$-MMALA steps. The left panel corresponds to $x_t$ for $t=37$, 
the right panel to $t=36.5$. The horizontal line on the left panel indicates the value of $f^{-1}(y_{37})$; 
the black rectangle at position 2 of the y-axis highlights the initial position.}
\label{fig:traceplot}
\end{figure}

For comparison, we applied $\infty$-MALA, MMALA in the same setting. $\infty$-MALA was extremely poor, 
as we had to use $h=10^{-5}$ to get acceptance ratio of $53\%$, as the algorithm 
does not adjust to the different (a-posteriori) scales in the target,  and needed a very small $h$ to stay close to the data.
For MMALA, we had to use a step-size of $h = 0.1$ to get average acceptance probability of $69\%$ (reduced to $55\%$ when trying $\delta'=\delta/2$), thus it is much less 
efficient that $\infty$-MMALA (execution times for both MMALA and $\infty$-MMALA were about 40s using Matlab on a
standard  PC). Fig.\ref{fig:QV} highlights a  consequence of the fundamental structural difference between $\infty$-MMALA and MMALA:
 we took an initial path pinned at the data, so $x_{t_i}=f^{-1}(y_i)$ for $1\le i \le n$ with Brownian bridges in-between, 
and run 1,000 iterations of $\infty$-MMALA and MMALA with $h=1.0$. The plots in Fig.\ref{fig:QV} show the 
estimated Quadratic Variation (QVe) of all 1,000 \emph{proposed} paths for both algorithms. Recall here
that  $QVe=\sum_{j=1}^{N}(x_{j}-x_{j-1})^2$, and this quantity 
will converge to $T=100$ as $\delta\rightarrow 0$. As $\infty$-MMALA is well-defined in infinite-dimensions, 
the estimated QV of the path is very close to the limiting one for $N=\infty$ (the acceptance ratio was $81\%$). 
In contrast, MMALA gave QVe's
wide off the mark; not surprisingly, all 1,000 proposed paths were rejected.



\begin{figure}[!h]
\begin{centering} 
\hspace{-2.7cm}\includegraphics[scale=0.6]{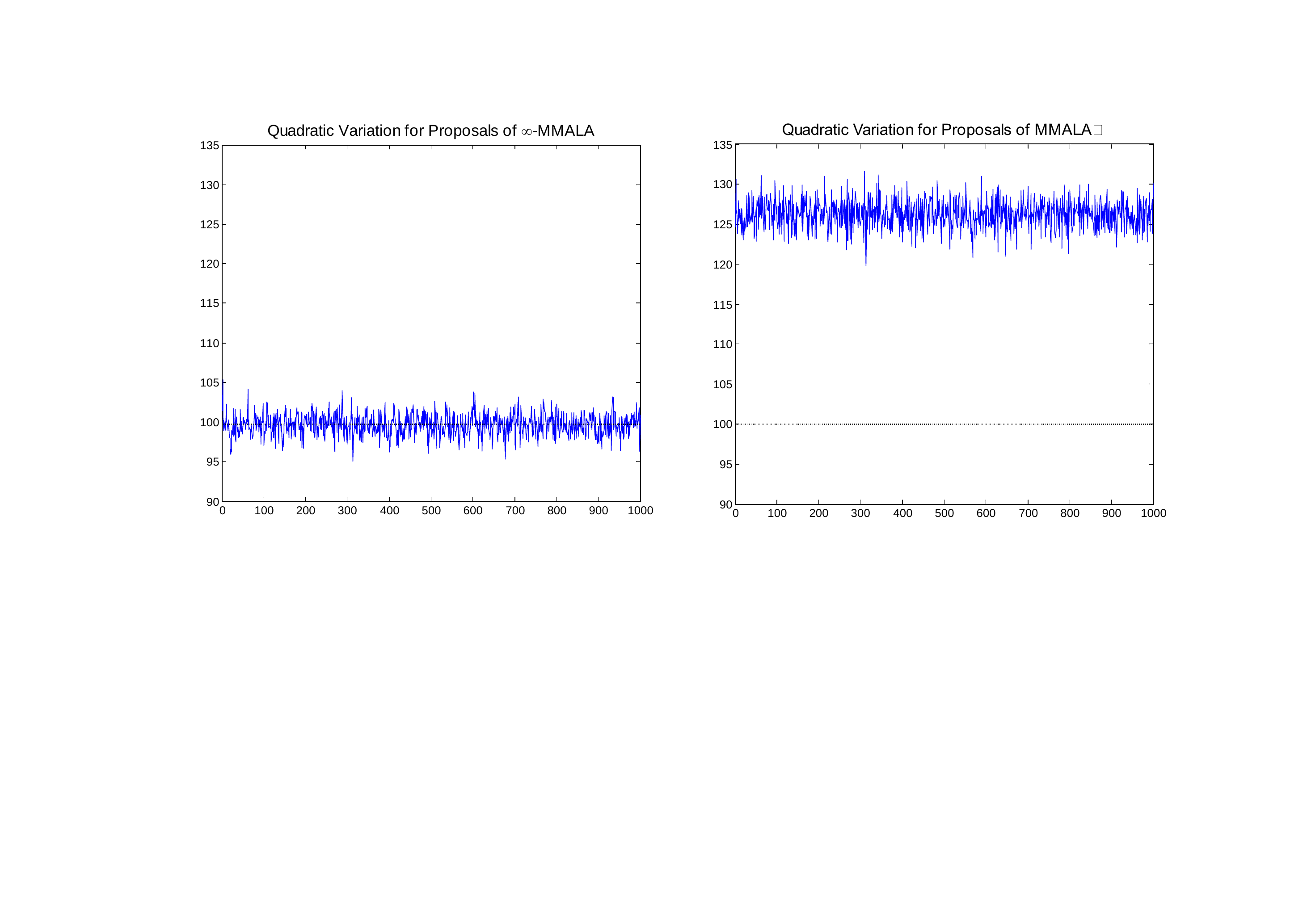} \qquad \qquad 
\end{centering}
\vspace{-6.3cm}
\caption{Estimated QV of the proposed $x$'s over 1,000 iterations of $\infty$-MMALA (left) and MMALA (right) 
both using step-size $h=1.0$. 
The horizontal lines highlight the limiting QV (equal to $T=100$) in infinite dimensions. MMALA proposes $x$'s 
with very wrong QV - so all of them got rejected; $\infty$-MMALA is well-defined in infinite-dimensions, thus 
the estimated QV of the proposed $x$'s is very close to the limiting one (acceptance rate $81\%$).}
\label{fig:QV}
\end{figure}

\section{Conclusions and Further Directions}
\label{sec:directions}

We presented a first attempt at merging in a principled manner
recently developed manifold and infinite-dimensional MCMC algorithms. 
A simple SDE-model served well as an example 
where the new method indeed combines the benefits of the two directions.
We aim to further develop this line of research and clarify the potential of new algorithms 
in important classes of applications. Some further investigations are summarised below.

\emph{Algorithmic Development:} We aim to develop a Hybrid Monte-Carlo (HMC) version of $\infty$-MMALA. Also, 
other approaches for merging infinite-dimensional algorithms with manifold ones could be more appropriate in 
particular settings. In data assimilation,
recent works have set-up a Bayesian framework in infinite-dimensions (see e.g.\@ \cite{stua:10} and the references therein) 
where information for important parameters of interest, such as the 
permeability field for sub-surface flow models, or the initial condition for fluid velocity dynamics,
is expressed in the form of a posterior distribution defined as a change of measure from a Gaussian prior. 
$\infty$-MALA (or a random-walk version of it) has turned out to be overly costly in this context (see e.g.\@ \cite{stua:10}) as the posterior could be characterised by far more complex correlation structure than the prior. 
It seems very natural to develop an $\infty$-MMALA in this context, by applying MMALA  
at the low frequency components of the unknown parameters that are mostly informed by the data, 
and $\infty$-MALA for the high-frequency ones that are mainly determined by the Gaussian prior. 
A similar algorithmic construction could give critical computational advantages in the class of models of SDEs
driven by fractional Brownian motion (fBM), where recent attempts (see e.g.\@ \cite{kalo:14} and the references therein) to apply MCMC have to deal with the existence 
of complex correlation structures among model parameters together with a high-dimensional latent driving fBM. 
Such ideas can be relevant also for Bayesian non-parametric density estimation, e.g.\@ the logistic Gaussian process prior (see e.g.~\cite{tokd:07,cott:13}).

\emph{Weakening Assumptions:} Assumptions \ref{ass:1}, \ref{ass:2} 
seem stronger than needed for the well-posedness of $\infty$-MMALA. In our example model for instance 
we indeed have $\N(0,G(x)^{-1})\simeq \N(0,L^{-1})$ for any $\sigma>0$, but not in the limit when $\sigma=0$, 
this maybe suggesting (falsely, from our experiments)  that the algorithm may deteriorate as $\sigma\rightarrow 0$. 
The resolution is that our assumptions and proof of well-posedness could involve some more `appropriate' Gaussian 
measure instead of $\N(0,L^{-1})$, and in particular one for which its density w.r.t.\@ $\N(0,G(x)^{-1})$ will not be trivial as 
$\sigma\rightarrow 0$. Such considerations are relevant beyond our example model.

\section*{Acknowledgments} I thank the anonymous referee and the Associate Editor for useful suggestions 
that have improved the content of the paper.
\label{sec:directions}

\section*{References}

\bibliographystyle{elsarticle-harv}

\end{document}